\newtheorem{theorem}{Theorem}[section]
\newtheorem{lemma}{Lemma}[section]
\newtheorem{proposition}{Proposition}[section]
\newtheorem{corollary}{Corollary}[section]
\theoremstyle{definition}
\newtheorem{definition}[theorem]{Definition}
\newtheorem{example}[theorem]{Example}
\newtheorem{remark}[theorem]{Remark}
\newtheorem{construction}[theorem]{Construction}
\begin{document}

\pagestyle{plain}
\title{\bf PETRI NETS AND ITS POLYNOMIALS}

\maketitle
\begin{center}
Andrey Grinblat\footnote{\texttt{expandrey@mail.ru}} and Viktor Lopatkin\footnote{\texttt{wickktor@gmail.com}, please use this email for contacting.}
\end{center}

\begin{abstract}
  For every finite Petri net, we construct a commutative polynomial in two variables and with coefficients from the semiring of natural numbers. We also present an inverse construction and show that multiplication of polynomials correspondence to product of the corresponding Petri nets in the category of Petri nets with Winskel's morphisms. We endow the set of all Petri nets with Zariski topology.
\medskip

\textbf{Mathematics Subject Classifications}: 68W10, 05C25, 12Y05, 05C31, 13F20, 13B25

\textbf{Key words}: Petri nets; polynomials; Winskel's morphisms; Zariski topology.
\end{abstract}

\section*{Introduction}
Petri nets are a tool for graphical and mathematical simulation, applicable to many systems. The are systems for describing and studying information processing systems that are characterized as being concurrent, asynchronous, distributed, parallel, nondeterministic, and/or stochastic. As a graphical tool, Petri nets can be used as a visual communication aid similar to flow charts, block diagrams, and networks. In addition, tokens are used in these nets to simulate the dynamics and concurrent activities of systems. As far as its being a mathematical tool, it is possible to set up state equations, algebraic equations, and other mathematical models governing the behavior of systems.

G. Winskel in \cite{W87} noticed that Petri nets can be viewed as certain 2-sorted algebras; it allows defining the concept of morphisms for Petri nets as homomorphisms of the corresponding algebras. Thus, the category of Petri nets is defined. The product of two Petri nets is defined in \cite{Win}.

It this paper, we consider every finite Petri net $N$ with injection map $\varphi:B \to \mathbb{N}$ on the set of its conditions, i.e., we label every condition by some natural number. For every pair $(N, \varphi)$ we construct a polynomial $P(N,\varphi) \in \mathbb{N}[x,y]$ in two variables and with coefficients from the semiring of natural numbers. Next, we present an inverse procedure, namely for every polynomial $P(x,y) \in \mathbb{N}[x,y]$, $P(0,0) \ne 0$ we construct the Petri net $\mathscr{N}(P)$. We show that multiplication of polynomials $P,Q\in \mathbb{N}[x,y]$, $P(0,0), Q(0,0) \ne 0$ correspondences to product of the Petri nets $\mathscr{N}(P) \times \mathscr{N}(Q)$ in the category of Petri nets with Winskel's morphisms. We also consider a Petri net $\mathscr{N}(P+Q)$ and show that it can be obtained from $\mathscr{N}(P)$ and $\mathscr{N}(Q)$ by ``attaching''. All of this enables us to introduce the Zariski topology on the set of Petri nets and we thus get a correspondence between prime ideals of $\mathbb{N}[x,y]$ and undecomposable Petri nets.

\section{The Category of Petri Nets}
In this section we recall some basic definitions of Petri nets theory and algebraic geometry. We essentially follow \cite {W87, Win} to define morphisms and product for Petri nets and we thus define Petri net category $\mathsf{PN}.$

\begin{definition}
A Petri net $N$ is a quadruple $(B,E, \mathrm{pre},\mathrm{post})$, where
\begin{itemize}
     \item[(1)] $B$, $E$ are disjoint finite sets of {\it conditions} and {\it events}, respectively,
     \item[(2)] $\mathrm{pre}:E \to 2^B$ is the {\it precondition} map such that $\mathrm{pre}(e)$ is nonempty for all $e \in E$,
     \item[(3)] $\mathrm{post}:E \to 2^B$ is the {\it postcondition} map such that $\mathrm{post}(e)$ is nonempty for all $e \in E$.
\end{itemize}
\end{definition}

Petri nets have a well-known graphical representation in which events are represented as boxes and conditions as circles with directed arcs between them (see fig.\ref{F1}).

\begin{figure}[h!]
\begin{tikzpicture}[node distance=1.3cm,>=stealth',auto]
  \tikzstyle{place}=[circle,thick,draw=blue!75,fill=blue!20,minimum size=6mm]
   \tikzstyle{transition}=[rectangle,thick,draw=black!75,
  			  fill=black!20,minimum size=4mm]
      \node [place, label=above:$b_0$]
                      (w1')                                                {};
    \node [place]     (c1') [below of=w1', label=below:$b_1$]              {};
    \node [place] (s1') [below of=c1',xshift=-7mm, label = left:$b_2$]      {};
    \node [place]
                      (s2') [below of=c1',xshift=7mm, label=right:$b_3$] {};
    \node [place]     (c2') [below of=s1',xshift=7mm, label=above:$b_4$]          {};
    \node [place]
                      (w2') [below of=c2', label=below:$b_5$]          {};
    \node [transition] (e1') [left of=c1'] {$e_1$}
      edge [pre,bend left]                  (w1')
      edge [post]                           (s1')
      edge [pre]                            (s2')
      edge [post]                           (c1');
    \node [transition] (e2') [left of=c2'] {$e_3$}
      edge [pre,bend right]                 (w2')
      edge [post]                           (s1')
      edge [pre]                            (s2')
      edge [post]                           (c2');
    \node [transition] (l1') [right of=c1'] {$e_2$}
      edge [pre]                            (c1')
      edge [pre]                            (s1')
      edge [post]                           (s2')
      edge [post,bend right] node[swap] {} (w1');
    \node [transition] (l2') [right of=c2'] {$e_4$}
      edge [pre]                            (c2')
      edge [pre]                            (s1')
      edge [post]                           (s2')
      edge [post,bend left]  node {}       (w2');
\end{tikzpicture}
\caption{An example of Petri net}\label{F1}
\end{figure}
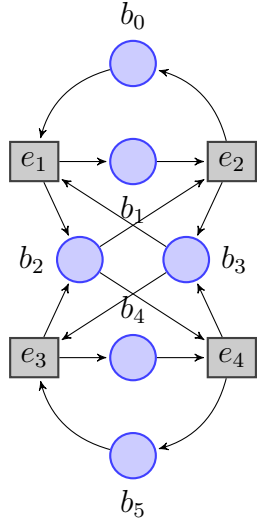

\begin{example}
  Let us consider the following Petri net $N$ which is shown in fig.\ref{F1}. We have $B = \{b_0,b_1,b_2,b_3,b_4,b_5\},$ $E = \{e_1,e_2,e_3,e_4\},$ $\mathrm{pre}(e_1) = \{b_0,b_3\}$, $\mathrm{post}(e_1) = \{b_1,b_2\}$, etc.
\end{example}

Let $N =(B,E, \mathrm{pre},\mathrm{post})$ be a Petri net with the events $E$. Define $E_*: = E \cup \{*\}$. We extend the pre and post condition maps to $*$ by taking
\[
\mathrm{pre}(*) = \varnothing, \qquad \mathrm{post}(*) = \varnothing.
\]

We will use the notation: whenever it does not cause confusion, we write $\mathstrut^\bullet e$ for the preconditions, $\mathrm{pre}(e)$ and $e^\bullet$ for the postcondition, $\mathrm{post}(e)$, of $e \in E_*$. We write $\mathstrut^\bullet e^\bullet$ for $\mathstrut^\bullet e  \cup e^\bullet$.

Now we aim to define the category of Petri nets \cite{W87, Win}.

\begin{definition}
Let $N =(B,E, \mathrm{pre},\mathrm{post})$ and $N' =(B',E', \mathrm{pre}',\mathrm{post}')$ be Petri nets. A morphism $(\beta,\eta):N \to N'$ consists of a relation $\beta \subseteq B \times B'$, such that $\beta^{\mathrm{op}}$ is a partial function $B' \to B$, and a partial function $\eta:E \to E'$ such that $\beta (\mathstrut^\bullet e) = \mathstrut^\bullet \eta(e)$, and $\beta(e^\bullet) = \eta(e)^\bullet.$ Thus, the diagrams
\[
 \xymatrix{
  E_* \ar@{->}[r]^{\mathrm{pre}} \ar@{->}[d]_{\eta} & 2^B \ar@{->}[d]^{2^\beta}\\
  E_*' \ar@{->}[r]_{\mathrm{pre}'}  & 2^{B'}
 }
 \qquad
 \xymatrix{
  E_* \ar@{->}[r]^{\mathrm{post}} \ar@{->}[d]_{\eta} & 2^B \ar@{->}[d]^{2^\beta}\\
  E_*' \ar@{->}[r]_{\mathrm{post}'}  & 2^{B'}
 }
\]
are commutative.
\end{definition}

\begin{proposition}\cite[Proposition 44]{Win}
Nets and their morphisms form a category $\mathsf{PN}$, in which the composition of two morphisms $(\beta,\eta):N\to N'$ and $(\beta',\eta'):N' \to N''$ is $(\beta\circ \beta', \eta \circ \eta'):N \to N''$ (composition in the left component being that of relations and in the right that of partial functions).
\end{proposition}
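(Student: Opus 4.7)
The plan is to verify the three categorical axioms—closure under composition, associativity, and existence of identities—by leveraging the fact that each of them already holds in the two underlying categories: the category of sets with relations (for the $\beta$-component) and the category of sets with partial functions (for the $\eta$-component). Since composition in $\mathsf{PN}$ is defined componentwise, most of the work reduces to showing that the commutativity of the pre- and post-condition squares is preserved by composition.

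For closure, given $(\beta,\eta):N\to N'$ and $(\beta',\eta'):N'\to N''$, I would first observe that $(\beta\circ\beta')^{\mathrm{op}}=\beta'^{\mathrm{op}}\circ\beta^{\mathrm{op}}$ is a composition of two partial functions and therefore a partial function, while $\eta\circ\eta'$ is a partial function for the same reason. The commutativity of the combined precondition square then follows by chaining the two given squares:
\[
(\beta\circ\beta')(\mathstrut^\bullet e)=\beta'\bigl(\beta(\mathstrut^\bullet e)\bigr)=\beta'\bigl(\mathstrut^\bullet\eta(e)\bigr)=\mathstrut^\bullet\eta'(\eta(e))=\mathstrut^\bullet(\eta\circ\eta')(e),
\]
with an identical computation for postconditions. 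The identity morphism on $N=(B,E,\mathrm{pre},\mathrm{post})$ is taken to be $(\Delta_B,\mathrm{id}_E)$, where $\Delta_B=\{(b,b):b\in B\}$; its opposite is the identity function on $B$, and the required compatibility $\Delta_B(\mathstrut^\bullet e)=\mathstrut^\bullet\mathrm{id}(e)$ is immediate. Associativity and the unit laws are then inherited directly from relational composition and from composition of partial functions, since both components compose independently.

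The one subtle point that requires care is the bookkeeping around partiality through the distinguished event $*$. If $\eta(e)$ is undefined, one interprets $\eta(e)=*$, so $\mathstrut^\bullet\eta(e)=\varnothing$, which forces $\beta(\mathstrut^\bullet e)=\varnothing$; applying $\beta'$ again preserves emptiness, matching $\mathstrut^\bullet*=\varnothing$, so the $*$-convention is stable under composition. Once this compatibility is checked, no genuine obstacle remains and the proposition reduces to a routine diagram-chase assembled from the corresponding assertions for $\mathbf{Rel}$ and $\mathbf{Par}$.
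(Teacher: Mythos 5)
Your proposal is correct: the paper itself gives no proof of this proposition (it is quoted from Winskel--Nielsen), and the componentwise verification you carry out --- closure by pasting the two commutative pre/post squares, identities given by the diagonal relation and the identity partial function, associativity inherited from $\mathbf{Rel}$ and partial functions --- is exactly the standard argument, which the paper itself signals by displaying the pasted diagrams immediately after the proposition. Your handling of the $*$-convention (undefinedness of $\eta$ forcing $\beta(\mathstrut^\bullet e)=\varnothing$, preserved under further composition) is the only point needing care, and you address it correctly.
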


From the Proposition it follows that we have the following commutative diagrams
\[
 \xymatrix{
  E_* \ar@{->}[r]^\eta \ar@/^2pc/[rr]^{\eta \circ \eta'} \ar@{->}[d]_{\mathrm{pre}}& E_*'  \ar@{->}[r]^{\eta '} \ar@{->}[d]_{\mathrm{pre}'}& E_* \ar@{->}[d]_{\mathrm{pre}} \\
  2^B \ar@{->}[r]_{2^\beta} \ar@/_2pc/[rr]_{2^{\beta \circ \beta'}} & 2^{B'} \ar@{->}[r]_{2^{\beta'}} & 2^B
   }
\qquad
\xymatrix{
  E_* \ar@{->}[r]^\eta \ar@/^2pc/[rr]^{\eta \circ \eta'} \ar@{->}[d]^{\mathrm{post}}& E_*'  \ar@{->}[r]^{\eta '} \ar@{->}[d]^{\mathrm{post}'}& E_* \ar@{->}[d]^{\mathrm{post}} \\
  2^B \ar@{->}[r]_{2^\beta} \ar@/_2pc/[rr]_{2^{\beta \circ \beta'}} & 2^{B'} \ar@{->}[r]_{2^{\beta'}} & 2^B
   }
\]

Recall that a morphism $f:X \to Y$ in a category $\mathscr{C}$ is an isomorphism if it admits a two-sided inverse, meaning that there is another morphism $f^{-1}:Y\to X$ such that $f^{-1}\circ f = \mathbf{id}_X$ and $f\circ f^{-1} = \mathbf{id}_{Y}$, where $\mathbf{id}_X$ and $\mathbf{id}_Y$ are the identity morphisms of $X$ and $Y$, respectively. We use the standard notation $f:X \cong Y$.

\begin{remark}[{\bf isomorphism in the category $\mathsf{PN}$}]\label{Isom}
We thus get, that a morphism $(\beta,\eta):N \to N'$ is an isomorphism, in the category $\mathsf{PN}$, if there is another morphism $(\beta^{-1},\eta^{-1})$ such that the following diagrams
\[
 \xymatrix{
  & E_* \ar@{->}[rr]^{\mathrm{pre}}  \ar@{->}[ld]_{\eta}  \ar@{->}[dd]^(.3){\mathbf{id}_{E_*}} && 2^B  \ar@{->}[ld]_{2^\beta}  \ar@{->}[dd]_{2^{\mathbf{id}_B}}\\
  E'_*  \ar@{->}[rr]_(0.7){\mathrm{pre}'}  \ar@{->}[rd]_{\eta^{-1}} && 2^{B'}  \ar@{->}[rd]_(0.4){2^{\beta^{-1}}}& \\
  & E_*  \ar@{->}[rr]_{\mathrm{pre}} && 2^B
 }
 \qquad
  \xymatrix{
  & E_* \ar@{->}[rr]^{\mathrm{post}}  \ar@{->}[ld]_{\eta}  \ar@{->}[dd]^(.3){\mathbf{id}_{E_*}} && 2^B  \ar@{->}[ld]_{2^{\beta^{-1}}}  \ar@{->}[dd]_{2^{\mathbf{id}_B}}\\
  E'_*  \ar@{->}[rr]_(0.7){\mathrm{post}'}  \ar@{->}[rd]_{\eta^{-1}} && 2^{B'}  \ar@{->}[rd]_(0.4){2^{\beta^{-1}}}& \\
  & E_*  \ar@{->}[rr]_{\mathrm{post}} && 2^B
 }
\]
are commutative. It follows that, in the case when $N$ and $N'$ are finite then an isomorphism between them is a pair $(\beta,\eta)$ of two bijections such that the aforementioned diagrams are commutative.
\end{remark}

In \cite{Win}, is was defined the product of Petri nets. The product of nets and its behavior are more straightforward, and, as is expected, correspond to a synchronization operation on nets.

\begin{definition}[{\bf Product of Petri nets}]\label{parallproduct} Let $N_1 =(B_1,E_{1*}, \mathrm{pre}_1,\mathrm{post}_1)$ $N_2 =(B_2,E_2, \mathrm{pre}_2,\mathrm{post}_2)$ be Petri nets. {\it Their product} $N = N_1 \times N_2:= (B,E_*,\mathrm{pre},\mathrm{post});$ it has the events $E : = E_{1*} \times_* E_{2*}$, the product in $\mathsf{Set}_*$ with the projections $\pi_1:E_* \to_* E_{1*}$ and $\pi_2:E_* \to_* E_{2*}$. Its conditions have the form $B: = B_1 \sqcup B_2$, the disjoint union of $B_1$ and $B_2$. Define $\rho_1$ to be the opposite relation to the injection $(\rho_1)^{\mathrm{op}}:B_1 \to B$. Define $\rho_2$ similarly. Define the pre and post conditions of an event $e$ in the product in terms of its pre and post conditions in the components by
\begin{align*}
& \mathrm{pre}(e): = (\rho_1)^{\mathrm{op}}[\mathrm{pre}_1(\pi_1(e))] + (\rho_2)^{\mathrm{op}}[\mathrm{pre}_2(\pi_2(e))]\\
& \mathrm{post}(e): = (\rho_1)^{\mathrm{op}}[\mathrm{post}_1(\pi_1(e))] +  (\rho_2)^{\mathrm{op}}[\mathrm{post}_2(\pi_2(e))].
\end{align*}
\end{definition}

\begin{figure}[h!]
\begin{center}
\begin{tikzpicture}[node distance=1.3cm,>=stealth',bend angle=25,auto]
  \tikzstyle{place}=[circle,thick,draw=blue!75,fill=blue!20,minimum size=6mm]
  \tikzstyle{red place}=[place,draw=red!75,fill=red!20]
  \tikzstyle{transition}=[rectangle,thick,draw=black!75,
  			  fill=black!20,minimum size=6mm]
  \begin{scope}
   \node [place, label=right:$0$] (v1)                                 {};
   \node [transition] (ee1) [above of=v1] {$a$}
     edge [pre, bend left]   (v1)
     edge [post, bend right] (v1);
    \end{scope}
\begin{scope}[xshift=2cm]
 \node [place, label = right:$1$] (v1')                 {};
 \node [transition] (b) [above of = v1'] {$b$}
   edge [post] (v1');
 \node [transition] (c) [below of = v1'] {$c$}
   edge [pre] (v1');
\end{scope}
\begin{scope}[xshift=6cm]
 \node [place, label = left:$0$] (w0)  {};
 \node[place] (w1) [right of=w0,xshift=3cm,label=right:$1$] {};
 \node [transition] (a*) [above of = w0] {$(a,*)$}
   edge [pre, bend left] (w0)
   edge [post, bend right] (w0);
 \node[transition] (*b) [above of= w1] {$(*,b)$}
   edge [post] (w1);
 \node [transition] (ab) [left of=*b,xshift=-1cm] {$(a,b)$}
   edge [pre,bend angle = 15, bend left] (w0)
   edge [post,bend angle = 15, bend right] (w0)
   edge [post,bend angle = 15, bend left] (w1);
 \node [transition] (*c) [below of=w1] {$(*,c)$}
   edge [pre] (w1);
 \node [transition] (ac) [left of=*c,xshift=-1cm] {$(a,c)$}
   edge [pre,bend angle = 15, bend right] (w1)
   edge [pre,bend angle = 15, bend left] (w0)
   edge [post,bend angle = 15, bend right] (w0);
 \end{scope}
\end{tikzpicture}
\end{center}
\caption{The product of two Petri nets is shown.}
\end{figure}
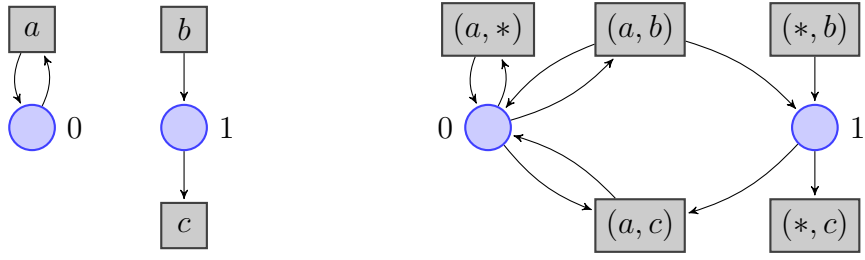

In the case $N = N_1 \times N_2$ we say $N_1,N_2$ are factors of $N$, $N$ is {\it decomposable} if $N_1$, $N_2$ are not empty Petri nets and {\it undecomposable} otherwise. Further, a factor $N_1$ of some Petri net $N$ is called {\it prime factor} if $N_1$ is not empty and it is undecomposable.

\section{Net's Polynomials}
In this section we construct for every Petri net $N$ a polynomial $P\in \mathbb{N}[x,y]$ and we shall shown that two isomorphic Petri nets, in the category $\mathsf{PN}$, can have the same polynomial.

We start from the following procedure allows to construct a polynomial $P(N) \in \mathbb{N}[x,y]$  from a given Petri net $N$.

\begin{construction}\label{construction1}
Let $N = (B,E_*,\mathrm{pre}, \mathrm{post})$ be a Petri net and $\varphi:B \to \mathbb{N}$ an injection. Set $P(N,\varphi):=\sum\limits_{e\in E_*}x^{i(e)}y^{j(e)}\in \mathbb{N}[x,y],$ where $i(e): = \sum\limits_{b \in {\mathstrut^\bullet e}}2^{\varphi(b)}$, $j(e): = \sum\limits_{b \in e^\bullet}2^{\varphi(b)}$ and we put $e \mapsto x^0y^0 = 1$ iff $\mathstrut^\bullet e^\bullet = \varnothing$, in particular $* \mapsto 1$.
\end{construction}

\begin{example}
Let us consider the following Petri net $N$ (see fig.\ref{ex1}), with $B = \{b_0,b_1,b_2,b_3,b_4\}$, $E_* = \{e_1, e_2,e_3,e_4,e_5,e_6 ,*\}$.

Set $\varphi(b_0)=0$, $\varphi(b_1) =1$, $\varphi(b_2) =2$, $\varphi(b_3) =3$ and $\varphi(b_4) =4$. We then get: $e_1 \mapsto x^{2^0}y^{2^2} = xy^4$, $e_2 \mapsto x^{2^0}y^{2^2} = xy^4$, $e_3 \mapsto x^{2^1}y^{2^3} = x^2y^{8}$, $e_4 \mapsto x^{2^1}y^{2^3} = x^2y^{8}$, $e_5 \mapsto x^{2^1}y^{2^3} = x^2y^{8}$, $e_6 \mapsto x^{2^2+2^3}y^{2^4} = x^{12}y^{16}$ and $* \mapsto 1$. We thus obtain $P(N,\varphi) = 2xy^4 + 3 x^2y^8 + x^{12}y^{16} + 1$.
\end{example}

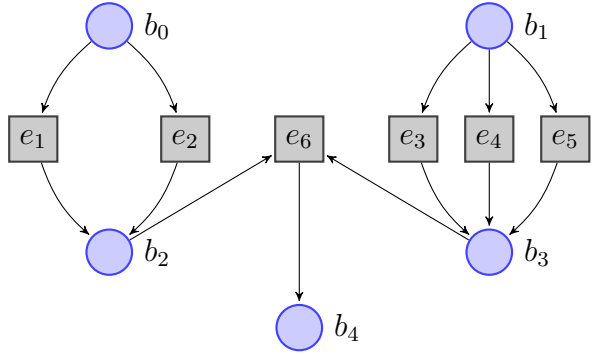
\begin{figure}[h!]
\begin{center}
\begin{tikzpicture}[node distance=1.3cm,>=stealth',bend angle=15,auto]
  \tikzstyle{place}=[circle,thick,draw=blue!75,fill=blue!20,minimum size=6mm]
  \tikzstyle{red place}=[place,draw=red!75,fill=red!20]
  \tikzstyle{transition}=[rectangle,thick,draw=black!75,
  			  fill=black!20,minimum size=6mm]
   \node[place,label=right:$b_0$] (0) at (0,4) {};
   \node[place,label=right:$b_1$] (1) at (5,4) {};
   \node[place,label=right:$b_2$] (2) at (0,1) {};
   \node[place,label=right:$b_3$] (3) at (5,1) {};
   \node[place,label=right:$b_4$] (4) at (2.5,0) {};
   \node[transition] (a) at (-1,2.5) {$e_1$} edge [post, bend right] (2) edge [pre, bend left] (0);
   \node[transition] (b) at (1,2.5) {$e_2$} edge [post, bend left] (2) edge [pre, bend right] (0);
   \node[transition] (c) at (4,2.5) {$e_3$} edge [post, bend right] (3) edge [pre, bend left] (1);
   \node[transition] (d) at (5,2.5) {$e_4$} edge [post] (3) edge [pre] (1);
   \node[transition] (e) at (6,2.5) {$e_5$} edge [post, bend left] (3) edge [pre, bend right] (1);
   \node[transition] (f) at (2.5,2.5) {$e_6$} edge [pre] (2) edge [pre] (3) edge [post] (4);
\end{tikzpicture}
\end{center}
\caption{Here the Petri net and its polynomial $P(N,\varphi) = 2xy^4 + 3 x^2y^8 +  x^{12}y^{16} + 1$ are shown.}\label{ex1}\end{figure}

\begin{lemma}
  Let $N = (B,E,\mathrm{pre},\mathrm{post})$, $N = (B',E',\mathrm{pre}',\mathrm{post}')$ be Petri nets; if there is an isomorphism $(\beta,\eta):N \xrightarrow{\cong} N'$ (in the category $\mathsf{PN}$, see Remark \ref{Isom}), then $P(N,\varphi) = P(N',\varphi\circ \beta^{-1})$, where $\varphi:B \to \mathbb{N}$ is an arbitrary injection.
\end{lemma}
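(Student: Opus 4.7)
The strategy is to use the fact that, by Remark~\ref{Isom}, in the finite case the isomorphism $(\beta, \eta)$ consists of two bijections $\beta: B \xrightarrow{\cong} B'$ and $\eta: E_* \xrightarrow{\cong} E'_*$ (with $\eta(*) = *$), and then to re-index the sum defining $P(N', \varphi \circ \beta^{-1})$ along $\eta$.

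First I would check that $\varphi \circ \beta^{-1}: B' \to \mathbb{N}$ is a legitimate injection: since $\beta$ is a bijection of finite sets, $\beta^{-1}$ is a well-defined function $B' \to B$, and composing injections yields an injection, so Construction~\ref{construction1} may indeed be applied to $(N', \varphi \circ \beta^{-1})$. Next, for each $e \in E_*$, I would use the commutative diagrams of the isomorphism to read off that $2^\beta({}^\bullet e) = {}^\bullet \eta(e)$ and $2^\beta(e^\bullet) = \eta(e)^\bullet$; since $\beta$ is a bijection, $2^\beta$ is simply the direct-image map $\beta(\cdot)$ on subsets, so these identities can be written as honest equalities $\beta({}^\bullet e) = {}^\bullet \eta(e)$ and $\beta(e^\bullet) = \eta(e)^\bullet$ in $2^{B'}$.

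The heart of the calculation is then a change of variables: denoting by $i'(\cdot), j'(\cdot)$ the exponents defined by $\varphi \circ \beta^{-1}$, one computes
\[
i'(\eta(e)) \;=\; \sum_{b' \in {}^\bullet \eta(e)} 2^{(\varphi \circ \beta^{-1})(b')} \;=\; \sum_{b' \in \beta({}^\bullet e)} 2^{\varphi(\beta^{-1}(b'))} \;=\; \sum_{b \in {}^\bullet e} 2^{\varphi(b)} \;=\; i(e),
\]
using the bijectivity of $\beta$ in the second equality (reindexing $b' = \beta(b)$). The identical argument with $\mathrm{post}$ in place of $\mathrm{pre}$ yields $j'(\eta(e)) = j(e)$. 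The case $e = *$ is consistent since $\eta(*) = *$ and both sides contribute the monomial $1$.

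Finally, since $\eta: E_* \to E'_*$ is a bijection, reindexing the sum gives
\[
P(N', \varphi \circ \beta^{-1}) \;=\; \sum_{e' \in E'_*} x^{i'(e')} y^{j'(e')} \;=\; \sum_{e \in E_*} x^{i'(\eta(e))} y^{j'(\eta(e))} \;=\; \sum_{e \in E_*} x^{i(e)} y^{j(e)} \;=\; P(N, \varphi),
\]
which is the desired equality. I do not expect a genuine obstacle here: the only mildly delicate point is making clear that $2^\beta$ acts as set-theoretic image (so that the abstract commutativity of the Winskel diagrams translates into a pointwise equality of the exponents), and that the base point $*$ is preserved by $\eta$ and contributes the constant term on both sides.
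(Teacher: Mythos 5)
Your proof is correct and follows essentially the same route as the paper's: both use the isomorphism identities $\beta({}^\bullet e) = {}^\bullet\eta(e)$, $\beta(e^\bullet) = \eta(e)^\bullet$ to show the exponents are preserved under the change of variables $b' = \beta(b)$, and then reindex the sum over events along the bijection $\eta$. Your version is in fact slightly more careful than the paper's (you explicitly verify that $\varphi\circ\beta^{-1}$ is an injection and that $*$ contributes the constant term on both sides), but there is no substantive difference in approach.
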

\begin{proof}
Let $P(N,\varphi) = \sum\limits_{e \in E_*}x^{i(e)}y^{j(e)}$ and $\beta(b) = b'$, $\eta(e) = e'$. Since the pair $(\beta,\eta)$ is the isomorphism $N \cong N'$ (see Remark \ref{Isom}) then it implies
\begin{align*}
  & i(e) :=  \sum\limits_{b \in \mathstrut^\bullet e} 2^{\varphi(b)} = \sum\limits_{\beta^{-1}(b') \in \mathstrut^\bullet \eta^{-1}(e)} 2^{\varphi (\beta^{-1} (b'))} = \sum\limits_{b' \in \mathstrut^\bullet e'} 2^{\varphi (\beta^{-1}(b'))}:=i(e')\\
  & j(e) :=  \sum\limits_{b \in e^\bullet} 2^{\varphi(b)} = \sum\limits_{\beta^{-1}(b') \in \eta^{-1}(e)^\bullet} 2^{\varphi (\beta^{-1} (b'))} = \sum\limits_{b' \in e'^\bullet} 2^{\varphi (\beta^{-1}(b'))}:=j(e').
\end{align*}

We thus get the following polynomial $P(N',\varphi\circ \beta^{-1}) = \sum\limits_{e' \in E'_*}x^{i(e')}y^{j(e')}$ which is obviously equal to $P(N,\varphi)$. This completes the proof.
\end{proof}

We explicitly describe an inverse procedure to Construction \ref{construction1}. This procedure allows us to construct a Petri net $N(P)$, with a concrete injection $\iota: \{\mbox{the set of condtions of $N$}\} \to \mathbb{N}$, from a given polynomial $P\in \mathbb{N}[x,y]$ with $P(0,0) \ne 0$. We shall further show that for every Petri net $N'$ (with an injection $\varphi$), the Petri net $N(P(N',\varphi))$ can be identified with $N$ (see Proposition \ref{Prop=}).

At first, we start with some notations. Let $k= \varepsilon_0(k)2^0 +\varepsilon_1(k)2^1 + \cdots + \varepsilon_{\ell_k}(k)2^{\ell_k}$ be a binary decomposition of $k \in \mathbb{N}$. Set $\tau(k): = \{t:\varepsilon_t(k) =1\}$ if $k>0$ and $\tau(0) = \varnothing$ in otherwise.

\begin{construction}\label{con2}
Let $I\subset \mathbb{N} \times \mathbb{N}$ be a finite set such that $(0,0) \in I$ and let $P = P(x,y) = \sum\limits_{(i,j) \in I} a_{i,j}x^iy^j \in \mathbb{N}[x,y]$ be a polynomial such that $a_{0,0} \ne  0$. For every pair $(i,j) \in I$ we form the following set $E_{i,j}:=\left\{e_1^{(i,j)},\ldots,e_{a_{i,j}}^{(i,j)} \right\}$ of some elements. Set $\mathscr{N}(P): = \Bigl(B,E_*, \mathrm{pre}, \mathrm{post}\Bigr)$, where
\[
 B =\tau(P):= \bigcup\limits_{(i,j) \in I} \{\tau(i), \tau(j)\}, \quad * := e_{a_{0,0}}^{(0,0)}, \quad E_* = \bigcup\limits_{(i,j)\in I}E_{i,j},
 \]
the maps $\mathrm{pre}, \mathrm{post}:E \to 2^B$ and the injection $\iota:B \to \mathbb{N}$ are defined as follows
\[
 \mathrm{pre}\left( e_{k}^{(i,j)}\right) =\tau(i), \qquad  \mathrm{post}\left(e_{k}^{(i,j)}\right)= \tau(j), \quad \iota(\tau(i)) = \tau(i).
\]

\end{construction}

\begin{figure}[h!]
\begin{center}
\begin{tikzpicture}[node distance=1.3cm,>=stealth',bend angle=25,auto]
  \tikzstyle{place}=[circle,thick,draw=blue!75,fill=blue!20,minimum size=6mm]
  \tikzstyle{red place}=[place,draw=red!75,fill=red!20]
  \tikzstyle{transition}=[rectangle,thick,draw=black!75,
  			  fill=black!20,minimum size=6mm]
  \node[place,label=below:$0$] (0) at (0,0) {};
  \node[place,label=below:$1$] (1) at (4,0) {};
  \node[transition] (e101) at (-2,0) {$e_{1}^{(0,1)}$} edge [post] (0);
  \node[transition] (e133) at (2,0) {$e_{1}^{(3,3)}$} edge [post,bend left] (0) edge [pre, bend right] (0) edge[pre, bend right] (1) edge[post, bend left] (1);
  \node[transition] (e220) at (4,1.5) {$e_{2}^{(2,0)}$} edge [pre] (1);
  \node[transition] (e120) at (6,0) {$e_{1}^{(2,0)}$} edge [pre] (1);
  \node[transition] (*) at (8,1.5) {$*$};
  \node[transition] (e100) at (8,0) {$e_{1}^{(0,0)}$};
\end{tikzpicture}
\end{center}
\caption{For the given polynomial $P = x^3y^3 + 2x^2 + y + 2$, the corresponding Petri net is shown.}\label{ex2}\end{figure}
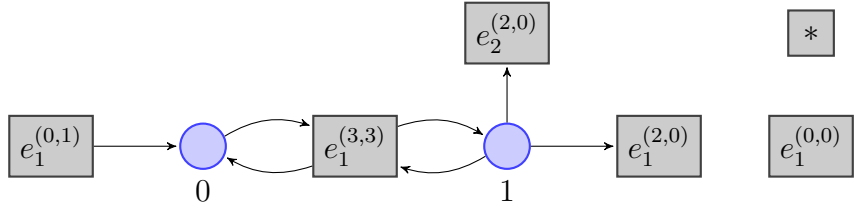

\begin{example}
For the following polynomial $P = x^3y^3 + 2x^2 + y + 2$, let us construct the corresponding Petri net $(B,E_*,\mathrm{pre},\mathrm{post})$. We obtain:
\begin{eqnarray*}
B &=& \tau(3) \cup \tau(2) \cup \tau(1) \cup \tau(0) \\
 &=&\{1,0\} \cup \{1\} \cup \{0\} \cup \varnothing = \{1,0\},
\\
E_* &=& E_{3,3} \cup E_{2,0} \cup E_{0,1} \cup E_{0,0} \\
 &=& \left\{e_1^{(3,3)}\right\} \cup \left\{e_1^{(2,0)}, e_2^{(2,0)}\right\} \cup \left\{e_1^{(0,1)}\right\} \cup \left\{e_1^{(0,0)},e_2^{(0,0)}=*\right\},
\end{eqnarray*}
and $\iota(0) = 0$, $\iota(1) = 1$. The maps $\mathrm{pre},\mathrm{post}:E \to 2^B$ are defined as it shown in fig.\ref{ex2}.
\end{example}

\begin{proposition}\label{Prop=}
Let $N = (B,E_*,\mathrm{pre},\mathrm{post})$ be a Petri net and $\varphi:B \to \mathbb{N}$ be an injection. There exists an isomorphism $N \cong \mathscr{N}(P(N,\varphi))$, in the category $\mathsf{PN}$.
\end{proposition}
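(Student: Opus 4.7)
The plan is to construct the isomorphism explicitly. Write $P(N,\varphi) = \sum_{(i,j) \in I} a_{i,j} x^i y^j$, where $I$ is the set of bidegrees occurring with positive coefficient, and set $N' = \mathscr{N}(P(N,\varphi)) = (B', E_*', \mathrm{pre}', \mathrm{post}')$. By Remark \ref{Isom}, since all sets in sight are finite, it suffices to produce a bijection $\beta \colon B \to B'$ and a bijection $\eta \colon E_* \to E_*'$ making the pre- and post-condition squares commute.

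First I would define $\beta(b) := \varphi(b)$. Injectivity is immediate from the hypothesis on $\varphi$, and the image equals $\{\varphi(b) : b \in B\}$, which agrees with $B' = \bigcup_{(i,j) \in I}(\tau(i) \cup \tau(j))$ as long as every condition of $N$ occurs in at least one $\mathstrut^\bullet e \cup e^\bullet$, a harmless assumption one tacitly makes. For $\eta$, group $E_*$ by the value $(i(e), j(e))$: for each $(i,j) \in I$ the fiber $\{e \in E_* : (i(e), j(e)) = (i,j)\}$ has cardinality $a_{i,j}$, which matches $|E_{i,j}'| = a_{i,j}$ from Construction \ref{con2}; choose any bijection between them, normalized so that $\eta(*) = e_{a_{0,0}}^{(0,0)} = *$.

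The main verification is commutativity of the pre- and post-condition squares, which reduces to the identity
\[
\beta(\mathstrut^\bullet e) = \mathrm{pre}'(\eta(e)) \qquad \text{(and analogously for post).}
\]
The right-hand side equals $\tau(i(e))$ by Construction \ref{con2}. For the left-hand side, the defining sum $i(e) = \sum_{b \in \mathstrut^\bullet e} 2^{\varphi(b)}$ is, by injectivity of $\varphi$, a sum of \emph{distinct} powers of $2$; hence it is the binary expansion of $i(e)$, and $\tau(i(e)) = \{\varphi(b) : b \in \mathstrut^\bullet e\} = \beta(\mathstrut^\bullet e)$. The post-condition case is identical with $j(e)$ in place of $i(e)$. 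Since $\beta$ and $\eta$ are bijections between finite sets, the inverse squares commute automatically, so $(\beta,\eta)$ is an isomorphism in $\mathsf{PN}$.

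The only subtle point is that the whole argument hinges on injectivity of $\varphi$: this is exactly what makes the exponents $i(e)$ and $j(e)$ record the preconditions and postconditions of $e$ faithfully through their binary decompositions $\tau(\cdot)$. A minor technical annoyance is the treatment of isolated conditions (those in $B$ absent from every $\mathstrut^\bullet e \cup e^\bullet$), which are invisible to the polynomial; the cleanest resolution is to assume, as is customary, that $B = \bigcup_{e \in E} \mathstrut^\bullet e^\bullet$, in which case $\beta$ is a bijection without caveat and no further bookkeeping is needed.
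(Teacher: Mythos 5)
Your proposal is correct and follows essentially the same route as the paper's own proof: a bijection on conditions induced by $\varphi$ and a fiberwise-chosen bijection on events matching the coefficient $a_{i,j}$ with the number of events of bidegree $(i,j)$. You are in fact more careful than the paper on two points worth keeping: the explicit verification that $\tau(i(e)) = \{\varphi(b) : b \in \mathstrut^\bullet e\}$ because injectivity of $\varphi$ makes $i(e)$ a sum of distinct powers of $2$, and the observation that conditions lying in no $\mathstrut^\bullet e^\bullet$ are invisible to $P(N,\varphi)$ --- the paper silently identifies $\tau(P(N,\varphi))$ with $\mathrm{Im}(\varphi)$, which requires exactly the hypothesis $B = \bigcup_{e\in E}\mathstrut^\bullet e^\bullet$ that you flag.
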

\begin{proof}
Let $\mathscr{N}(P(N,\varphi)) = \bigl(\widetilde{B},\widetilde{E}_*, \widetilde{\mathrm{pre}}, \widetilde{\mathrm{post}}\bigr)$. From Construction \ref{con2} it follows that $\widetilde{B} = \tau(P(N,\varphi))=  \{n \in \mathbb{N}:n \in \mathrm{Im}(\varphi) \} = \mathrm{Im}(\varphi)$, i.e., $\widetilde{B} = \mathrm{Im}(\varphi)$, but since $\varphi$ is injection, we thus get the following bijection $\beta=\varphi^{-1}:\widetilde{B} \cong B:\varphi =\beta^{-1}.$ Further, from Construction \ref{construction1} it follows that $\sum\limits_{(i,j)\in I}a_{i,j} = |E_*|$ it implies $|\widetilde{E}_*| = \sum\limits_{(i,j) \in I}a_{i,j} = |E_*|,$ and thus there exist bijections between these sets. Let us choose a bijections (say) $\eta: \widetilde{E_*} \cong E_*$ satisfies the following conditions
\[
\begin{cases} \mathstrut^\bullet \eta(e_{k}^{(i,j)}) = \tau(i) \in \widetilde{B},\\
 \eta(e_{k}^{(i,j)})^\bullet = \tau(j) \in \widetilde{B},
 \end{cases}
 \quad
 \begin{cases} \mathstrut^\bullet \eta^{-1}\left(e\right) = \varphi^{-1}(\tau(i)) \in B,\\
 \eta\left(e\right)^\bullet = \varphi^{-1}(\tau(j)) \in B,
 \end{cases}
\]
for every $\widetilde{E} \ni e_{k}^{(i,j)} \leftrightarrow e \in E_*$, we then obtain the needed isomorphism $(\beta, \eta): N \cong \mathscr{N}(P(N,\varphi))$. This completes the proof.
\end{proof}

\section{Algebraic Operations on Petri Nets}
In this section we consider sum and multiplication of polynomials from the semiring $\mathbb{N}[x,y]$ in the context of Petri nets: we prove that multiplication of two polynomials $P$, $P'$, such that $P(0,0),P'(0,0) \ne 0$, corresponding to product of the Petri nets $\mathscr{N}(P)$, $\mathscr{N}(P')$; and the sum $P+P'$ correspondences to operation, on the set of Petri net, which was described in \cite{Kotov}. We shall see that this way implies a criteria for a decomposition Petri nets (in the sense of Definition \ref{parallproduct}).

\subsection{Multiplication of net's polynomials}
Here we take an interest in the multiplication of two polynomials $P_1,P_2 \in \mathbb{N}[x,y]$ in the context of corresponding Petri nets $\mathscr{N}(P_1)$, $\mathscr{N}(P_2)$. We shall show that it correspondences to the product of Petri nets in category $\mathsf{PN}$.

First of all we need the following technical but useful result.

\begin{lemma}\label{Lemma2}
Let $P = P_1 \cdot P_2 \in \mathbb{N}[x,y]$; then $\tau(P_1) \cap \tau(P_2) = \varnothing$ if and only if $\tau(P) = \tau(P_1) \cup \tau(P_2)$.
\end{lemma}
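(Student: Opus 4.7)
The plan is to prove each direction using the elementary bit-addition identity: for $a,b\in\mathbb{N}$, one has $\tau(a+b)=\tau(a)\cup \tau(b)$ exactly when $\tau(a)\cap \tau(b)=\varnothing$ (i.e.\ the binary addition has no carries); otherwise a carry at the lowest shared bit propagates and shifts bits upward out of $\tau(a)\cup \tau(b)$. Since $P=P_1\cdot P_2=\sum_{(i,j),(k,\ell)}a_{i,j}b_{k,\ell}\,x^{i+k}y^{j+\ell}$, the exponents appearing in $P$ are sums of exponents from $P_1$ and $P_2$, so the whole question reduces to analysing such binary sums.

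First I would handle the forward direction. Assume $\tau(P_1)\cap \tau(P_2)=\varnothing$. For any $(i,j)\in \mathrm{supp}(P_1)$ and $(k,\ell)\in \mathrm{supp}(P_2)$ we have $\tau(i),\tau(j)\subseteq \tau(P_1)$ and $\tau(k),\tau(\ell)\subseteq \tau(P_2)$, so the hypothesis forces $\tau(i)\cap \tau(k)=\tau(j)\cap \tau(\ell)=\varnothing$. By the identity above, $\tau(i+k)=\tau(i)\cup \tau(k)$ and $\tau(j+\ell)=\tau(j)\cup \tau(\ell)$, both contained in $\tau(P_1)\cup \tau(P_2)$; this gives $\tau(P)\subseteq \tau(P_1)\cup \tau(P_2)$. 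For the reverse inclusion I use the crucial fact that coefficients in $\mathbb{N}$ cannot cancel, so every product monomial $x^{i+k}y^{j+\ell}$ survives in $P$. Given $p\in \tau(P_1)$, pick $(i,j)\in \mathrm{supp}(P_1)$ realising $p$ (say $p\in \tau(i)$) and any $(k,\ell)\in \mathrm{supp}(P_2)$; then $p\in \tau(i)\subseteq \tau(i+k)\subseteq \tau(P)$. The case $p\in \tau(P_2)$ is symmetric.

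For the converse I argue by contrapositive: assume $\tau(P_1)\cap \tau(P_2)\ne \varnothing$ and produce a bit in $\tau(P)$ that escapes $\tau(P_1)\cup \tau(P_2)$. Choose a common bit $q$ and exponent pairs $(i,j)\in \mathrm{supp}(P_1)$, $(k,\ell)\in \mathrm{supp}(P_2)$ witnessing $q$ in both factors, say with $q\in \tau(i)\cap \tau(k)$ (the other placements of $q$ are handled symmetrically, using the $y$-exponent instead). By no-cancellation the monomial $x^{i+k}y^{j+\ell}$ survives in $P$; the overlap of bit $q$ triggers a carry in $i+k$ which, suitably arranged, contributes a bit to $\tau(i+k)$ outside $\tau(P_1)\cup \tau(P_2)$, contradicting $\tau(P)=\tau(P_1)\cup \tau(P_2)$.

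The main obstacle is that a carry originating at bit $q$ need not always escape $\tau(P_1)\cup \tau(P_2)$: it can land harmlessly on a bit already present there, so an arbitrary overlap does not by itself force a new bit. The technical heart of the argument is therefore the selection of the right pair of exponents. I would attempt an extremal choice, for instance taking $q:=\max\bigl(\tau(P_1)\cap \tau(P_2)\bigr)$ together with exponents $i,k$ that have no bits in $\tau(P_1)\cup \tau(P_2)$ lying strictly above the position reached by the carry cascade, so that the top bit of $i+k$ must exceed $\max\bigl(\tau(P_1)\cup \tau(P_2)\bigr)$ and thus produces a fresh element of $\tau(P)$. Handling this carry-propagation analysis cleanly is the step I expect to require the most care.
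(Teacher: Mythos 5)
Your forward direction is correct and is essentially the paper's argument made precise: absence of cancellation over $\mathbb{N}$ gives $\mathrm{supp}(P)=\mathrm{supp}(P_1)+\mathrm{supp}(P_2)$, and the carry-free addition identity applied to each pair of exponents yields both inclusions. The gap is in the converse, and it is not repairable: the obstacle you flag yourself --- that a carry produced at a shared bit can land on a position already present in $\tau(P_1)\cup\tau(P_2)$ --- is fatal, because the converse implication is in fact false. Take $P_1=x+1$ and $P_2=x^2+x+1$. Then $\tau(P_1)=\{0\}$ and $\tau(P_2)=\{0,1\}$, so $\tau(P_1)\cap\tau(P_2)=\{0\}\neq\varnothing$; yet $P=P_1P_2=x^3+2x^2+2x+1$ has $\tau(P)=\tau(3)\cup\tau(2)\cup\tau(1)\cup\tau(0)=\{0,1\}=\tau(P_1)\cup\tau(P_2)$. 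The only overlapping pair of exponents is $(1,1)$, whose sum $2$ contributes the bit $1$, already supplied by the monomial $x^2$ of $P_2$; no selection of exponents, extremal or otherwise, can manufacture an escaping bit because none exists. A second, independent failure mode is the ``mixed'' placement that you dismiss as symmetric: if the common bit $q$ is carried by an $x$-exponent of $P_1$ and a $y$-exponent of $P_2$ (e.g.\ $P_1=x+1$, $P_2=y+1$, $P=xy+x+y+1$), those two exponents are never added to each other, no carry occurs at all, and again $\tau(P)=\tau(P_1)\cup\tau(P_2)=\{0\}$ while the intersection is $\{0\}$.

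For what it is worth, the paper's own proof has the same hole: it reduces the lemma to the single-integer statement that $\tau(i_1+i_2)=\tau(i_1)\cup\tau(i_2)$ iff $\tau(i_1)\cap\tau(i_2)=\varnothing$ (which is true), but $\tau(P)=\tau(P_1)\cup\tau(P_2)$ is an equality of unions taken over all pairs of exponents and does not force the per-pair equality; the counterexamples above live exactly in that slack. So you should not attempt to close the converse as stated. Only the direction $\tau(P_1)\cap\tau(P_2)=\varnothing\Rightarrow\tau(P)=\tau(P_1)\cup\tau(P_2)$ --- which you do prove, and which is the only direction invoked later (Theorem \ref{theorem} already assumes the disjointness) --- is salvageable.
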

\begin{proof}
It obviously suffices to prove that for any two integers $i_1,i_2 \in \mathbb{N}$, $\tau(i_1+i_2) = \tau(i_1) \cup \tau(i_2)$ if and only if $\tau(i_1) \cap \tau(i_2) = \varnothing$.

Let $i_1 = \varepsilon_0^{(1)}2^0 + \varepsilon_1^{(1)}2^1 + \ldots +\varepsilon_0^{(1)}2^{\ell_{i_1}}$ and $i_2 = \varepsilon_0^{(2)}2^0 + \varepsilon_1^{(2)}2^1 + \ldots +\varepsilon_0^{(2)}2^{\ell_{i_2}}$. Assume that $\tau(i_1)\cap \tau(i_2) = \varnothing$, it follows that $i_1+i_2 = \varepsilon_0 2^0 + \varepsilon_12^1 + \ldots +\varepsilon_\ell2^{\ell}$, where $\varepsilon_j = \mathrm{max}\left\{\varepsilon_j^{(1)},\varepsilon_j^{(2)}\right\}$, $1 \le j \le \mathrm{max}\{\ell _{i_1}, \ell_{i_2}\} = \ell,$ i.e., $\tau(i_1 + i_2) = \tau(i_1) \cup \tau(i_2).$ Conversely, let $\tau(i_1 + i_2) = \tau(i_1) + \tau(i_2)$ and suppose that $\tau(i_1) \cap \tau(i_2) \ne \varnothing$. For every $n \in \tau(i_1) \cap \tau(i_2)$, we obviously have $\varepsilon_n^{(1)} + \varepsilon_n^{(2)} = 1+1 = 0\bmod (2)$, it follows that $n \notin \tau(i_1+i_2)$, but $n \in \tau(i_1) \cap \tau(i_2) \subseteq \tau(i_1) \cup \tau(i_2) =  \tau(i_1+i_2)$, i.e., we get a contradiction.
\end{proof}

\begin{theorem}\label{theorem}
Let $(N,\varphi)=\bigl((B,E_*, \mathrm{pre},\mathrm{post}),\varphi\bigr)$ be a Petri net with an injection $\varphi:B \to \mathbb{N}$, let $ P_1, P_2 \in \mathbb{N}[x,y]$ such that $\tau(P_1)\cap \tau(P_2) = \varnothing$. Then $P(N,\varphi) = P_1 \cdot P_2$ if and only if $\mathscr{N}(P_1) \times \mathscr{N}(P_2) \cong N$, in the category $\mathsf{NP}$.
\end{theorem}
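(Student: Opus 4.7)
My plan is to establish both implications via a single core claim: whenever $\tau(P_1) \cap \tau(P_2) = \varnothing$, there is a canonical isomorphism $\mathscr{N}(P_1 \cdot P_2) \cong \mathscr{N}(P_1) \times \mathscr{N}(P_2)$ in $\mathsf{PN}$. Granting this, the forward direction is immediate: from $P(N,\varphi) = P_1 \cdot P_2$ and Proposition \ref{Prop=} one has $N \cong \mathscr{N}(P(N,\varphi)) = \mathscr{N}(P_1 P_2) \cong \mathscr{N}(P_1)\times\mathscr{N}(P_2)$. The converse composes the given isomorphism with the core isomorphism to get $N \cong \mathscr{N}(P_1 P_2)$, and then applies the previous lemma (isomorphic nets yield equal polynomials under transferred injections) together with the identity $P(\mathscr{N}(Q),\iota) = Q$ that is built into Construction \ref{con2} to deliver $P(N,\varphi) = P_1 \cdot P_2$.

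For the core claim I would first match conditions using Lemma \ref{Lemma2}: the hypothesis gives $\tau(P_1 P_2) = \tau(P_1) \cup \tau(P_2)$, so the label-forgetting map $\beta:\tau(P_1) \sqcup \tau(P_2) \to \tau(P_1)\cup\tau(P_2)$ is a bijection identifying the conditions of $\mathscr{N}(P_1)\times\mathscr{N}(P_2)$ with those of $\mathscr{N}(P_1 P_2)$. The disjointness of $\tau(P_1),\tau(P_2)$ also propagates to exponents: for any $(i_1,j_1)\in\mathrm{supp}(P_1)$ and $(i_2,j_2)\in\mathrm{supp}(P_2)$ the bit-sets $\tau(i_s),\tau(j_s)$ are pairwise disjoint across $s=1,2$, so $\tau(i_1+i_2)=\tau(i_1)\sqcup\tau(i_2)$ (and similarly for $j$), and the map $((i_1,j_1),(i_2,j_2))\mapsto(i_1+i_2,j_1+j_2)$ is a bijection $\mathrm{supp}(P_1)\times\mathrm{supp}(P_2) \to \mathrm{supp}(P_1 P_2)$. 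Each coefficient $c_{I,J}$ of $P_1 P_2$ then factors as $a_{i_1,j_1}^{(1)} a_{i_2,j_2}^{(2)}$ at the unique preimage, which lets me pick bijections $\{1,\ldots,a_{i_1,j_1}^{(1)}\}\times\{1,\ldots,a_{i_2,j_2}^{(2)}\} \cong \{1,\ldots,c_{I,J}\}$ and assemble them into a bijection $\eta:E_{1*}\times_* E_{2*} \to E_*^{P_1 P_2}$, with the product's basepoint $(*_1,*_2)$ (both components at $(0,0)$) sent to $* = e_{c_{0,0}}^{(0,0)}$.

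Commutativity of the pre/post diagrams then reduces to the identity $\beta\bigl(\tau(i_1)\sqcup\tau(i_2)\bigr) = \tau(i_1)\cup\tau(i_2) = \tau(i_1+i_2)$ (and its post analogue), which is exactly the support bijection above; the basepoints all carry empty pre/post so no extra check is needed for them. The main obstacle I anticipate is the bookkeeping between the pointed product $E_{1*}\times_* E_{2*}$ of $\mathsf{Set}_*$ (in which only $(*_1,*_2)$ is collapsed to the basepoint while $(e,*_2)$ and $(*_1,e')$ remain as genuine events) and the flat event list $\bigcup_{(I,J)}E_{I,J}$ of $\mathscr{N}(P_1 P_2)$; the global count $|E_{1*}|\cdot|E_{2*}| = P_1(1,1)P_2(1,1) = (P_1 P_2)(1,1) = \sum_{I,J}c_{I,J}$ confirms the cardinalities match, and the support bijection ensures the pre/post data is preserved layer by layer.
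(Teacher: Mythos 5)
Your proposal is correct, and in the forward direction it essentially retraces the paper's steps (Proposition \ref{Prop=}, Lemma \ref{Lemma2}, and the exponent/coefficient bijection), but the overall organization and the converse are genuinely different. You factor both implications through a single core claim, $\mathscr{N}(P_1P_2)\cong\mathscr{N}(P_1)\times\mathscr{N}(P_2)$ whenever $\tau(P_1)\cap\tau(P_2)=\varnothing$; the paper never isolates this statement. Within that claim your argument that $((i_1,j_1),(i_2,j_2))\mapsto(i_1+i_2,j_1+j_2)$ is a bijection onto the support of $P_1P_2$, with $c_{I,J}=a^{(1)}_{i_1,j_1}a^{(2)}_{i_2,j_2}$ at the unique preimage, is sharper than the paper's: the paper proves injectivity the same way but obtains surjectivity only from the global cardinality count $|E_*|=|E_{1*}|\cdot|E_{2*}|$, and it leaves the bookkeeping of the pointed product $E_{1*}\times_*E_{2*}$ (which you handle explicitly, basepoints included) implicit. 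For the converse the paper does not reuse part (1) at all; it computes $P(N,\varphi)$ directly on the product net via $i((e_1,e_2))=i(e_1)+i(e_2)$, which follows from $B=B_1\sqcup B_2$ in Definition \ref{parallproduct}, and then factors the resulting double sum into $P_1\cdot P_2$. Your route instead composes $N\cong\mathscr{N}(P_1)\times\mathscr{N}(P_2)\cong\mathscr{N}(P_1P_2)$ and appeals to the lemma on isomorphic nets plus the identity $P(\mathscr{N}(Q),\iota)=Q$. One caveat, which your write-up at least makes visible while the paper hides it: that last step needs the transported injection $\varphi\circ\beta^{-1}$ to be the canonical labelling $\iota$ of $\mathscr{N}(P_1P_2)$; for an arbitrary $\varphi$ one only obtains $P(N,\varphi)=P(\mathscr{N}(P_1),\varphi|_{B_1})\cdot P(\mathscr{N}(P_2),\varphi|_{B_2})$, a relabelled factorization. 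The paper's converse silently identifies $\sum_{e_1}x^{i(e_1)}y^{j(e_1)}$ with $P_1$ and so carries exactly the same unstated assumption, so this is a defect of the theorem's formulation rather than of your proof.
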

\begin{proof} Let $P(N,\varphi) = \sum\limits_{(i,j) \in I}a_{i,j}x^iy^j$, $P_1 = \sum\limits_{(i,j) \in I_1}a_{i,j}^{(1)}x^iy^j$, $P_2 = \sum\limits_{(i,j) \in I_2}a_{i,j}^{(2)}x^iy^j$, here $I,I_1,I_2$ are finite subsets of $\mathbb{N}\times \mathbb{N}$, such that $(0,0) \in I,I_1,I_2$. Further, let us set $\mathscr{N}(P(N,\varphi)) = \bigl(\widetilde{B},\widetilde{E}_*, \mathrm{pre}, \mathrm{post}\bigr)$, $\mathscr{N}(P_1) = \bigl(B_1,E_{1*}, \mathrm{pre}, \mathrm{post}\bigr)$ and $\mathscr{N}(P_1) = \bigl(B_2,E_{2*}, \mathrm{pre}, \mathrm{post}\bigr)$.

(1) Assume that $P(N,\varphi)=P_1\cdot P_2$. By Proposition \ref{Prop=}, there exists an isomorphism $\mathscr{N}(P(N,\varphi)) \cong N$ and by Lemma \ref{Lemma2} we get $\tau(P(N,\varphi)) = \tau(P_1) \cup \tau(P_2)$. We have also seen (see the proof of Proposition \ref{Prop=}) that $\widetilde{B} = \tau(P(N,\varphi))$. Fix $(i,j) \in I$, it is clear that $a_{i,j} = \sum\limits_{\substack{i_1 + i_2 = i \\ j_1 + j_2 = j}}a^{(1)}_{i_1,j_1}a^{(2)}_{i_2,j_2}$,  here $(i_1,j_1)\in I_1$, $(i_2,j_2)\in I_2$. It follows from the preceding discussion that
\[
\left|\widetilde{E_*}\right| = |E_*| =  \sum_{(i,j) \in I} a_{i,j} = \sum\limits_{(i,j) \in I_1}a_{i,j}^{(1)}\sum\limits_{(i,j) \in I_2}a_{i,j}^{(2)} = \left|\widetilde{E_{1*}}\right| \times \left|\widetilde{E_{2*}}\right|.
\]

Hence, there exist bijections $\widetilde{E_{*}} \cong E_* \cong \widetilde{E_{1*}} \times \widetilde{E_{2*}}$. Next, let us consider the following map
\[
E_{1*}\times E_{2*} \ni \bigl(x^{i_1}y^{j_1},x^{i_2}y^{j_2} \bigr) \mapsto  x^{i_1+i_2}y^{j_1+j_2} \in E_*,
\]
since $\tau(P_1) \cap \tau(P_2) = \varnothing$, we see that this map is an injection. Further, from the fact $E_* \cong E_{1*} \times E_{2*}$ it follows that this map is a bijection. Hence we get
\[
b \in \mathstrut^\bullet(e_1,e_2) \Longleftrightarrow b \in \tau(i_1) \cup \tau(i_2) \Longleftrightarrow b \in \tau(i_1 + i_2) \Longleftrightarrow b \in \mathstrut^\bullet e,
\]
it implies that $\mathscr{N}(P_1) \times \mathscr{N}(P_2) \cong N$.

(2) Conversely, assume that $\mathscr{N}(P_1) \times \mathscr{N}(P_2) \cong N$. We have $P(N,\varphi): = \sum\limits_{e \in E_*}x^{i(e)}y^{j(e)}$, where $i(e): = \sum\limits_{b \in \mathstrut^\bullet e}2^{\varphi(b)}$ and $j(e): = \sum\limits_{b\in e^\bullet} 2^{\varphi(b)}$. Because $N \cong \mathscr{N}(P_1) \times \mathscr{N}(P_2)$, it follows from Definition \ref{parallproduct} that
\begin{eqnarray*}
 i(e) &=& i((e_1,e_2)) = \sum\limits_{b \in \mathstrut^\bullet e}2^{\varphi(b)} =  \sum\limits_{b\in \mathstrut^\bullet (e_1,e_2)}2^{\varphi(b)}\\
 &=& \sum\limits_{b_1 \in \mathstrut^\bullet e_1}2^{\varphi(b_1)} + \sum\limits_{b_2 \in \mathstrut^\bullet e_2}2^{\varphi(b_2)} = i(e_1) + i(e_2),
\end{eqnarray*}
here $e_1 \in E_{1*}$ and $e_2 \in E_{2*}$. In the same way one can easy get $j(e) = j((e_1,e_2)) = j(e_1)+ j(e_2)$. We thus obtain
\begin{eqnarray*}
  P(N,\varphi) &=& \sum\limits_{e \in E_*} x^{i(e)}y^{j(e)} = \sum\limits_{e_1\in E_{1*}}\sum\limits_{e_2 \in E_{2*}}x^{i(e_1) + i(e_2)} y^{j(e_1)+ j(e_2)}\\
  &=& \sum\limits_{e_1 \in E_{1*}}x^{i(e_1)}y^{j(e_1)} \sum\limits_{e_2 \in E_{2*}}x^{i(e_2)}y^{j(e_2)} = P_1\cdot P_2.
\end{eqnarray*}

Finally, note that Definition \ref{parallproduct} implies that $B$ has to be the disjoint union $B_1 \sqcup B_2$ of $B_1$ and $B_2$. It shows that $\tau(P_1) \cap \tau(P_2) = \varnothing.$ This completes the proof.
\end{proof}

\begin{corollary}\label{corol1}
  A Petri net $N = (B,E,\mathrm{pre},\mathrm{post})$ is decomposable, i.e., it can be represented as a product of two Petri nets in the category $\mathsf{PN}$, if and only if the polynomial $P(N,\varphi)$, for an arbitrary choice of an injection $\varphi:B \to \mathbb{N}$, is decomposable over $\mathbb{N}[x,y]$, $P(N,\varphi) = P_1 \cdot P_2$ and $\tau(P_1) \cap \tau(P_2) = \varnothing$.
\end{corollary}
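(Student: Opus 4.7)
The plan is to recognize this as a near-tautological consequence of Theorem \ref{theorem}, with Proposition \ref{Prop=} supplying the bridge between an abstract Petri net and the one reconstructed from its polynomial. I would split into the two implications.

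For the direction from a factored polynomial to a decomposed net, assume $P(N,\varphi)=P_1\cdot P_2$ with $\tau(P_1)\cap\tau(P_2)=\varnothing$. Theorem \ref{theorem} then immediately yields an isomorphism $\mathscr{N}(P_1)\times\mathscr{N}(P_2)\cong N$ in $\mathsf{PN}$, which is precisely a decomposition of $N$ into two factors.

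For the converse direction, suppose $N\cong N_1\times N_2$ with both $N_i$ non-empty, and fix an arbitrary injection $\varphi:B\to\mathbb{N}$. Using the concrete description of the product in Definition \ref{parallproduct}, I would identify $B$ with $B_1\sqcup B_2$ via the canonical injections $(\rho_i)^{\mathrm{op}}$, and pull $\varphi$ back to injections $\varphi_i:=\varphi\circ(\rho_i)^{\mathrm{op}}:B_i\to\mathbb{N}$. Global injectivity of $\varphi$ on the disjoint union forces $\varphi_1(B_1)\cap\varphi_2(B_2)=\varnothing$, which translates directly into $\tau(P(N_1,\varphi_1))\cap\tau(P(N_2,\varphi_2))=\varnothing$. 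Proposition \ref{Prop=} then lets me replace each $N_i$ by $\mathscr{N}(P(N_i,\varphi_i))$, so that $\mathscr{N}(P(N_1,\varphi_1))\times\mathscr{N}(P(N_2,\varphi_2))\cong N$; a final application of Theorem \ref{theorem} in the reverse direction yields the desired factorization $P(N,\varphi)=P(N_1,\varphi_1)\cdot P(N_2,\varphi_2)$ with disjoint $\tau$-supports.

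The only subtlety I anticipate is bookkeeping around the degenerate case: one must ensure that neither factor $P_i$ collapses to the constant polynomial $1$, which is exactly the polynomial of the empty Petri net (with $E_*=\{*\}$ and $B=\varnothing$). Going from decomposition to factorization this follows from the non-emptiness of $N_1$ and $N_2$; going from factorization to decomposition it should be read into the meaning of ``decomposable over $\mathbb{N}[x,y]$'', since a factorization in which one factor is the unit $1$ is not a genuine decomposition. No argument beyond Theorem \ref{theorem} and Proposition \ref{Prop=} is needed.
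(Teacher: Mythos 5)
Your proposal is correct and takes the same route as the paper, which simply states that the corollary ``immediately follows from Theorem \ref{theorem}''; your write-up just makes explicit the pull-back of $\varphi$ to the factors via $(\rho_i)^{\mathrm{op}}$ and the use of Proposition \ref{Prop=}, which is exactly the unwinding the paper leaves implicit. The remark about excluding the trivial factor $1$ (the empty net's polynomial) is a sensible clarification but not a departure from the paper's argument.
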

\begin{proof}
  This immediately follows from Theorem \ref{theorem}.
\end{proof}

\begin{figure}[h!]
\begin{center}
\begin{tikzpicture}[node distance=1.3cm,>=stealth',bend angle=25,auto]
  \tikzstyle{place}=[circle,thick,draw=blue!75,fill=blue!20,minimum size=6mm]
  \tikzstyle{red place}=[place,draw=red!75,fill=red!20]
  \tikzstyle{transition}=[rectangle,thick,draw=black!75,
  			  fill=black!20,minimum size=6mm]
  \node[place,label=below:$b_0$] (0) at (0,0) {};
  \node[place,label=below:$b_1$] (1) at (4,0) {};
  \node[transition] (a) at (-2,0) {$a$} edge [pre] (0);
  \node[transition] (b) at (2,0) {$b$} edge [pre] (0) edge[post] (1);
  \node[transition] (c) at (6,0) {$c$} edge [post] (1);
  \end{tikzpicture}
\end{center}
\caption{The Petri net $N$.}\label{parall(1)}
\end{figure}
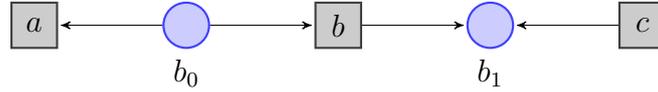

\begin{example}
Let us consider the following Petri net $N$ which is shown in fig.\ref{parall(1)}. Let us find out whether this Petri net can be decomposable.

Set $\varphi(b_0) = 0$ and $\varphi(b_1) = 1$. We thus get $a \mapsto x^{2^0} = x$, $b \mapsto x^{2^0}y^{2^1} = xy^2,$ $ \mapsto y^{2^1} = y^2$ and $ * \mapsto 1.$ It implies that $P(N,\varphi) = x + xy^2 + y^2+1$. It is not hard to see that $P(N,\varphi) = x + xy^2 + y^2+1 = (x+1)(y^2+1)$. Let $P_1 = x+1$ and $P_2 = y^2+1$. We see that $\tau(P_1) \cap \tau(P_2) = \varnothing.$ Using Construction \ref{con2} we construct the Petri nets $\mathscr{N}(P_1)$ and $\mathscr{N}(P_2)$ (see fig.\ref{parall(2)}). It is easy to see that $N = \mathscr{N}(P_1) \times \mathscr{N}(P_2)$.
\begin{figure}[h!]
\begin{center}
\begin{tikzpicture}[node distance=1.3cm,>=stealth',bend angle=25,auto]
  \tikzstyle{place}=[circle,thick,draw=blue!75,fill=blue!20,minimum size=6mm]
  \tikzstyle{red place}=[place,draw=red!75,fill=red!20]
  \tikzstyle{transition}=[rectangle,thick,draw=black!75,
  			  fill=black!20,minimum size=6mm]
  \node[place,label=right:$0$] (0) at (0,0) {};
  \node[place,label=right:$1$] (1) at (4,0) {};
  \node[transition] (e1) at (0,-1.5) {$e_1$} edge [pre] (0);
  \node[transition] (e2) at (4,-1.5) {$e_2$} edge [post] (1);
  \end{tikzpicture}
\end{center}
\caption{Two Petri nets $\mathscr{N}(P_1)$ (at left) and $\mathscr{N}(P_1)$ (at right) that correspond to the polynomials $P_1 = x+1$ and $P_2 = y^2+1$, respectively.}\label{parall(2)}\end{figure}
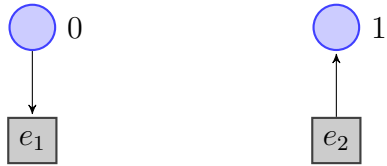
\end{example}

\subsection{Sum of net's polynomials}
Here we consider sum $P_1+P_2$ of two polynomials $P_1, P_2 \in \mathbb{N}[x,y]$ in the context of Petri nets, i.e., we want to know what a Petri net arises from a sum of two net's polynomials.

We start with the following example.

\begin{example}
Let $N_1 = \{B_1,E_{1*},\mathrm{pre}_1,\mathrm{post}_1\}$, $N_2 = \{B_2,E_{2*},\mathrm{pre}_2,\mathrm{post}_2\}$ be Petri nets which are shown in fig.\ref{last1}. We have $B_1 = \{b_{11}, b_{12}\}$, $E_{1*} = \{a,*_1\}$, $B_2 = \{b_{21}, b_{22}, b_{23}\}$ and $E_{2*} = \{b,*_2\}$.

Let injections $\varphi_1:B_1 \to \mathbb{N}$, $\varphi_2: B_2 \to \mathbb{N}$ be given by $\varphi_1(b_{11}) = 1$, $\varphi_1(b_{12}) = 2$ and $\varphi_2(b_{21}) = 2$, $\varphi_2(b_{22}) = 3$, $\varphi_2(b_{23}) = 4$. Then, using Construction \ref{construction1}, we obtain $P(N_1,\varphi_1) = xy^4+1$ and $P(N_2,\varphi_2) = x^4y^{24}+1.$

We now aim to construct a Petri net correspondences to the sum of these polynomials. Let $P:= x^4y^{24}+ xy^4  + 2$. Using Construction \ref{con2} and taking account that $24 = 11000_2$, $4 = 100_2$, we obtain
\begin{eqnarray*}
  B &=& \tau(P) = \tau(24) \cup \tau(4) \cup \tau(1) \cup \tau(0)\\
  &=& \{4,3\} \cup \{2\} \cup \{0\} \cup \{\varnothing\},\\
  E_* &=& E_{4,24} \cup E_{1,4} \cup E_{0,0}\\
   &=& \left\{e_1^{(4,24)}\right\} \cup \left\{e_1^{(2,4)}\right\} \cup \left\{e_1^{(0,0)}, e_2^{(0,0)} = *\right\},
\end{eqnarray*}
and the maps $\mathrm{pre}$ and $\mathrm{post}$ are defined as it shown in fig.\ref{last2}. We see that the Petri net $\mathscr{N}(P)$ can be described as ``an attaching'' the net $N_1$ to $N_2$ by the map $\varphi(b_{12}) = \varphi(b_{21}).$
\end{example}

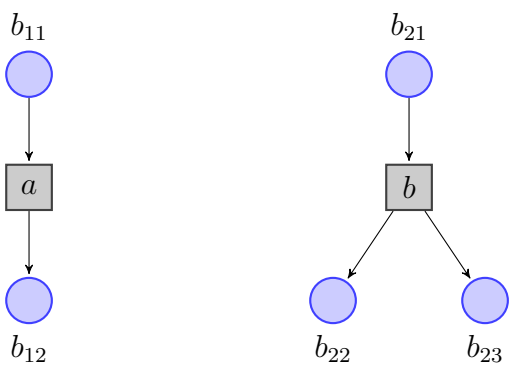
\begin{figure}[h!]
\begin{center}
\begin{tikzpicture}[node distance=1.3cm,>=stealth',bend angle=25,auto]
  \tikzstyle{place}=[circle,thick,draw=blue!75,fill=blue!20,minimum size=6mm]
  \tikzstyle{red place}=[place,draw=red!75,fill=red!20]
  \tikzstyle{transition}=[rectangle,thick,draw=black!75,
  			  fill=black!20,minimum size=6mm]
  \node[place,label=above:$b_{11}$] (1) at (0,0) {};
  \node[place,label=below:$b_{12}$] (2) at (0,-3) {};
  \node[place, label=above:$b_{21}$] (2') at (5,0) {};
  \node[place, label=below:$b_{22}$] (3) at (4,-3) {};
  \node[place, label=below:$b_{23}$] (4) at (6,-3) {};
  \node[transition] (a) at (0,-1.5) {$a$} edge [pre] (1) edge [post](2);
  \node[transition] (b) at (5,-1.5) {$b$} edge [post] (3) edge[post] (4) edge[pre] (2');
  \end{tikzpicture}
\end{center}
\caption{Two Petri nets $N_1$(left side) and $N_2$(right side) are shown.}\label{last1}\end{figure}

\begin{figure}[h!]
\begin{center}
\begin{tikzpicture}[node distance=1.3cm,>=stealth',bend angle=25,auto]
  \tikzstyle{place}=[circle,thick,draw=blue!75,fill=blue!20,minimum size=6mm]
  \tikzstyle{red place}=[place,draw=red!75,fill=red!20]
  \tikzstyle{transition}=[rectangle,thick,draw=black!75,
  			  fill=black!20,minimum size=6mm]
  \node[place,label=above:$1$] (1) at (0,0) {};
  \node[place,label=above:$2$] (2) at (3,0) {};
  \node[place, label=above:$3$] (3) at (6,1) {};
  \node[place, label=below:$4$] (4) at (6,-1) {};
  \node[transition] (a) at (1.5,0) {$e^{(1,4)}_1$} edge [pre] (1) edge [post](2);
  \node[transition] (b) at (4.5,0) {$e^{(4,24)}_1$} edge [pre] (2) edge [post](3) edge[post] (4);
  \node[transition] (c) at (7,0){$e^{(0,0)}_1$};
  \end{tikzpicture}
\end{center}
\caption{The Petri net $\mathscr{N}(P)$ is shown, where $P=xy^4+ x^4y^{24}+2$.}\label{last2}\end{figure}
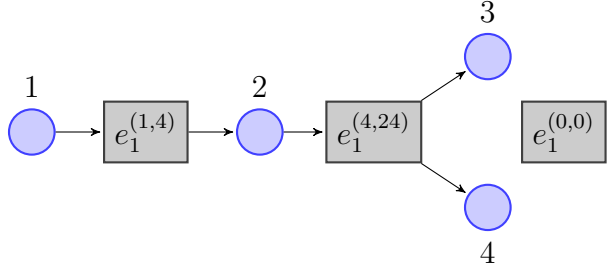

The preceding example shows that it is convenient to introduce the following definition.

\begin{definition}\label{sumofnets}
  Let $N_1 = (B_1, E_{1*},\mathrm{pre}_1, \mathrm{post}_1)$, $N_2 = (B_2, E_{2*},\mathrm{pre}_2, \mathrm{post}_2)$ be two Petri nets and let $\varphi_1:B_1 \to \mathbb{N}$, $\varphi_2:B_2 \to \mathbb{N}$ be injections. We define {\it an attaching $N_1$ to $N_2$ by $\varphi_1$, $\varphi_2$} as the following Petri net $N_1\bigsqcup\limits_{\varphi_1,\varphi_2}N_2 = (B,E_*,\mathrm{pre}, \mathrm{post})$, where $B = B_1 \sqcup B_2/\{ b_1 \sim b_2 \mbox{ if $\varphi_1(b_1) = \varphi_2(b_2)$}\}$, the events $E$ has the following form $E_{1*}\sqcup_* E_{2*} \cup \{\star\}$, here $E_{1*}\sqcup_* E_{2*}$ is the coproduct (wedge sum) in $\mathsf{Set}_*$, $\star$ is an added event and the maps $\mathrm{pre}$, $\mathrm{post}$ are defined as follows \[
  \mathrm{pre}(e) = \begin{cases}\mathrm{pre}_1(e), \mbox{ if } e \in E_{1}, \\ \mathrm{pre}_2(e), \mbox{ if } e \in E_{2}, \\ \end{cases} \quad \mathrm{post}(e) = \begin{cases}\mathrm{post}_1(e), \mbox{ if } e \in E_{1*}, \\ \mathrm{post}_2(e), \mbox{ if } e \in E_{2*}, \end{cases}
  \]
and we set $\mathrm{pre}(\star) = \mathrm{post}(\star) = \varnothing.$
\end{definition}

\begin{theorem}
Let $N_1 = (B_1, E_{1*},\mathrm{pre}_1, \mathrm{post}_1)$, $N_2 = (B_2, E_{2*},\mathrm{pre}_2, \mathrm{post}_2)$ be two Petri nets and let $\varphi_1:B_1 \to \mathbb{N}$, $\varphi_2:B_2 \to \mathbb{N}$ be injections. Consider $N_1 \bigsqcup\limits_{\varphi_1,\varphi_2} N_2 = (B,E_*,\mathrm{pre}, \mathrm{post})$ and let an injection $\varphi:B \to \mathbb{N}$ is defined as follows $\varphi(b) = \begin{cases} \varphi_1(b), \mbox{ if } b \in B_1, \\ \varphi_2(b), \mbox{ if } b \in B_2. \end{cases}$ Then $P(N,\varphi) = P(N_1,\varphi_1) + P(N_2,\varphi_2).$ Conversely, let $P_1,P_2 \in \mathbb{N}[x,y]$ be two polynomials such that $P_1(0,0), P_2(0,0)\ne 0$, then the Petri nets $\mathscr{N}(P_1 + P_2)$ and $\mathscr{N}(P_1) \bigsqcup\limits_{\iota_1,\iota_2} \mathscr{N}(P_2)$ are isomorphic in the category $\mathsf{NP}$.
\end{theorem}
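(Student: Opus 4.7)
The plan is to unfold Constructions \ref{construction1}, \ref{con2} and Definition \ref{sumofnets}, keeping careful track of how the basepoint of $\mathsf{Set}_*$ behaves under the wedge.

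\emph{Forward direction.} Let $N = N_1 \bigsqcup\limits_{\varphi_1,\varphi_2} N_2$. Every event of $N$ is of one of four kinds: the common basepoint $* = *_1 = *_2$ of the wedge, an element of $E_1 = E_{1*}\setminus\{*_1\}$, an element of $E_2 = E_{2*}\setminus\{*_2\}$, or the added event $\star$. For $e \in E_1$ the preconditions and postconditions are inherited unchanged from $N_1$, they lie in $B_1 \hookrightarrow B$, and $\varphi|_{B_1} = \varphi_1$; hence the monomial $e$ contributes to $P(N,\varphi)$ is exactly the one it contributes to $P(N_1,\varphi_1)$. Symmetrically for $e \in E_2$. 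The basepoint $*$ and the added $\star$ each contribute $1$, which matches the two constant contributions made by $*_1$ and $*_2$ to $P(N_1,\varphi_1)$ and $P(N_2,\varphi_2)$. Summing over the four types of events gives $P(N,\varphi) = P(N_1,\varphi_1) + P(N_2,\varphi_2)$.

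\emph{Converse direction.} Write $P_k = \sum_{(i,j) \in I_k} a^{(k)}_{i,j} x^i y^j$ with the convention $a^{(k)}_{i,j} = 0$ whenever $(i,j) \notin I_k$, so that $P_1 + P_2 = \sum_{(i,j) \in I_1 \cup I_2}\bigl(a^{(1)}_{i,j} + a^{(2)}_{i,j}\bigr) x^i y^j$. Apply Construction \ref{con2} to each of $P_1$, $P_2$ and $P_1 + P_2$. Since $\iota_1, \iota_2$ are identity inclusions into $\mathbb{N}$, the relation $b_1 \sim b_2 \Leftrightarrow \iota_1(b_1) = \iota_2(b_2)$ reduces to ordinary equality, so both $\mathscr{N}(P_1+P_2)$ and $\mathscr{N}(P_1) \bigsqcup\limits_{\iota_1,\iota_2} \mathscr{N}(P_2)$ have condition set $\tau(P_1) \cup \tau(P_2)$; I take $\beta = \mathrm{id}$. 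A count of events yields $|E_{1*}| + |E_{2*}| - 1 + 1 = \sum\bigl(a^{(1)}_{i,j} + a^{(2)}_{i,j}\bigr) = |\widetilde{E}_*|$, and I define $\eta$ by partitioning events on each side according to their type $(i,j)$ and choosing any bijection within each type, arranging moreover that the wedge basepoint be sent to $\widetilde{*}$. Since every event of type $(i,j)$ satisfies ${}^\bullet e = \tau(i)$ and $e^\bullet = \tau(j)$ on both sides, the pair $(\beta,\eta)$ makes the diagrams of Remark \ref{Isom} commute and is the required isomorphism in $\mathsf{PN}$.

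The only step needing real care is the bookkeeping at $(0,0)$: the wedge in $\mathsf{Set}_*$ loses exactly one event (the identified basepoint $*_1 = *_2$), and this loss is compensated precisely by adjoining $\star$, so that the number of empty-pre-post events agrees with the constant term $a^{(1)}_{0,0} + a^{(2)}_{0,0}$ of $P_1 + P_2$. Everything else is a direct unfolding of the relevant constructions.
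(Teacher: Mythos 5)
Your proposal is correct and follows essentially the same route as the paper's proof: both directions are a direct unfolding of Constructions \ref{construction1}, \ref{con2} and Definition \ref{sumofnets}, splitting the events of the attached net into those coming from $E_{1*}$, those from $E_{2}$, and the added $\star$, and matching condition sets via $\tau(P_1+P_2)=\tau(P_1)\cup\tau(P_2)$. Your explicit bookkeeping at $(0,0)$ --- the wedge identifying $*_1=*_2$ being compensated by $\star$ --- is exactly the point the paper's terser argument relies on, so there is nothing to add.
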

\begin{proof}
From Construction \ref{construction1} and Definition \ref{sumofnets}  it follows that
\[
\sum\limits_{e \in E_*}x^{i(e)}y^{(j(e))} = \sum\limits_{e_1 \in E_{1*}} x^{i(e_1)}y^{j(e_1)} + \sum\limits_{e_2 \in E_2} x^{i(e_2)}y^{j(e_2)} + x^{i(\star)}y^{j(\star)},
\]
and since $\varphi(b) = \begin{cases} \varphi_1(b), \mbox{ if } b \in B_1, \\ \varphi_2(b), \mbox{ if } b \in B_2, \end{cases}$ we see that $P(N,\varphi) = P(N_1,\varphi_1) + P(N_2,\varphi_2).$

Conversely,  let $P_1,P_2 \in \mathbb{N}[x,y]$ be two polynomials, $P_1 = \sum\limits_{(i_1,j_1) \in I_1}a_{i_1,j_1}^{(1)}x^{i_1}y^{j_1}$, $P_2 = \sum\limits_{(i_2,j_2) \in I_2}a_{i_2,j_2}^{(2)}x^{i_2}y^{j_2}$, such that $P_1(0,0) \ne 0$ and $P_2(0,0)\ne 0$. Let $\mathscr{N}(P_1 + P_2) =  \bigl(\widetilde{B},\widetilde{E}_*, \widetilde{\mathrm{pre}}, \widetilde{\mathrm{post}}\bigr)$ and $\mathscr{N}(P_1) \bigsqcup\limits_{\iota_1,\iota_2} \mathscr{N}(P_2) = (B, E_*,\mathrm{pre},\mathrm{post})$.

By Construction \ref{con2} we get
\begin{eqnarray*}
  \widetilde{B} &=& \tau(P_1 + P_2) = \bigcup\limits_{(i,j) \in I_1 \cup I_2} \left\{ \tau(i), \tau(j) \right\}\\
   &=& \tau(P_1) \cup \tau(P_2) \cong \tau(P_1) \sqcup \tau(P_2)/ \{\tau(i_1) = \tau(i_2), \tau(j_1) = \tau(j_2)\},\\
   E_* &=& \bigcup_{\substack{(i_1,j_1) \in I_1 \\ (i_2, j_2) \in I_2}} \left\{ e_1^{(i_1,j_1)}, \ldots,e_{a_{i_1,j_1}^{(1)}}^{(i_1,j_1)}\right\} \cup \left\{ e_2^{(i_2,j_2)}, \ldots,e_{a_{i_2,j_2}^{(2)}}^{(i_2,j_2)}\right\}
\end{eqnarray*}
it follows that there exists a bijection $\beta: \widetilde{B} \cong B$ and a bijection $\eta: \widetilde{E_*} \cong E_*$, because Definition \ref{sumofnets} and Construction \ref{con2} yield $B = \tau(P_1) \sqcup \tau(P_2)/ \{\tau(i_1) = \tau(i_2), \tau(j_1) = \tau(j_2)\}$, $\widetilde{E_*} \cong E_{1*} \sqcup_* E_{2*} \cup \{e_{a_{0,0}^{(1)}}^{(0,0)}\}$ and $\widetilde{E_*} \cong E_{1*} \sqcup_* E_{2*} \cup \{e_{a_{0,0}^{(2)}}^{(0,0)}\}$. Finally, it is easy to see that under these bijections $\beta, \eta$ the corresponding diagrams of Remark \ref{Isom} are commutative. This completes the proof.

\end{proof}

\section{Zariski Topology on Petri Nets}

From the preceding section it follows that we can consider every polynomial $P \in \mathbb{N}[x,y]$, $P(0,0) \ne 0$ as a Petri net and vice versa. As well known, every commutative (semi)ring can be endowed with the Zariski topology.

Recall some basic definitions and concepts of Zariski topology. We essentially follow \cite{G}.

{\it A semiring} is a nonempty set $R$ on which operations of addition and multiplication have been defined such that the following conditions hold:
 \begin{itemize}
 \item[(1)] $(R,+)$ is a commutative monoid with identity element $0$;
 \item [(2)] $(R,\cdot)$ is a semigroup;
 \item[(3)] multiplications distributes over addition from either side;
 \item[(4)] $0r = r0 =0$ for all $r \in R$.
 \end{itemize}
Further, if $(R,\cdot)$ is a commutative semigroup and it contains identity element then the semiring $R$ is called unitary commutative (or $R$ is a commutative semiring with unit). We consider only commutative semirings with unit.

As in the case of (associative commutative) rings, we can introduce ideals of semirings. An {\it ideal} $I$ of a semiring $R$ is a nonempty subset of $R$ satisfying the following conditions: (1) if $a,b \in I$ then $a+b \in I$; (2) if $a \in I$ and $r\in R$ then $ra\in I$; (3) $I \ne R$.

Let $R$ be an a (commutative) semiring with unit, an ideal $\mathfrak{p} \subset R$ is called prime if and only if whenever $a\cdot b \in \mathfrak{p}$, for $a,b \in R$, we must have either $a \in \mathfrak{p}$ or $b \in \mathfrak{p}.$ The {\it spectrum} of $R$, denoted $\mathrm{Spec}(R)$, is the set of all prime ideals of $R$. The set $\mathrm{Spec}(R)$ can be equipped with the Zariski topology, for which the closed sets are the sets
\[
 V(I):=\bigl\{\mathfrak{p}\in \mathrm{Spec}(R) | I\subseteq \mathfrak{p}\bigr\}
\]
where $I$ is an ideal. Then, it is easy to see that:
\begin{itemize}
  \item[(1)] $V\left(\sum\limits_{\alpha \in A}I_\alpha\right) = \bigcap\limits_{\alpha \in A}V(I_\alpha)$, for every family $\{I_\alpha\}_{\alpha \in A}$ of ideals of $R$,
  \item[(2)] $V(I) \cup V(J) = V(IJ) = V(I \cap J)$ for every pair $I,J$ of ideals of $R$.
\end{itemize}

A basis for the Zariski topology can be constructed as follows. For $f\in R$, define $D_f:=\mathrm{Spec}(R)\setminus V((f))$. Then each $D_f$ is an open subset of $\mathrm{Spec}(R)$, and $\{D_f|f\in R\}$ is a basis for the Zariski topology.

We are now able to present the following

\begin{theorem}\label{general}
  Let $N \in \mathsf{PN}$ be a finite Petri net. Let $V(N)$ be a set $\{N' \in \mathsf{PN}\}$ of all decomposable Petri nets, such that $N'$ is a factor of $N$. The set $\mathscr{P}_{\varphi} = \bigl\{(N =   (B,E_*,\mathrm{pre},\mathrm{post}),\varphi)\bigr\}$ of all finite Petri nets $N  \in \mathsf{PN}$ with fixed injections $\varphi:B \to \mathbb{N}$, can be endowed with a topology (the Zariski topology) which is a collection of the closed subsets $V(N)$.
\end{theorem}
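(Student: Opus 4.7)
The plan is to transport the Zariski topology along the bijective correspondence $(N,\varphi) \leftrightarrow P(N,\varphi)$ between $\mathscr{P}_\varphi$ and $\{P \in \mathbb{N}[x,y] : P(0,0) \ne 0\}$ set up by Constructions~\ref{construction1} and~\ref{con2} and Proposition~\ref{Prop=}. Under this dictionary, ``$N'$ is a factor of $N$'' translates by Theorem~\ref{theorem} into polynomial divisibility $P(N',\varphi') \mid P(N,\varphi)$ together with the $\tau$-disjointness condition on the complementary factor, and decomposability corresponds to the existence of such a nontrivial factorization inside $\mathbb{N}[x,y]$. Consequently, checking that $\{V(N) : N \in \mathsf{PN}\}$ fulfils the closed-set axioms on $\mathscr{P}_\varphi$ reduces to verifying the corresponding statements for the associated polynomials and then pulling back.

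First I would handle finite unions. By Theorem~\ref{theorem}, the product $N_1 \times N_2$ corresponds to $P(N_1,\varphi_1)\cdot P(N_2,\varphi_2)$ whenever $\tau(P(N_1,\varphi_1)) \cap \tau(P(N_2,\varphi_2)) = \varnothing$, and any decomposable factor of $N_1$ or of $N_2$ is a factor of this product; conversely, any decomposable factor of the product arises (via the projections $\rho_1,\rho_2$ of Definition~\ref{parallproduct}) as a factor of one of the two components. This gives
\[
V(N_1) \cup V(N_2) \;=\; V(N_1 \times N_2),
\]
which is again of the required form. For arbitrary intersections I would exploit that a polynomial in $\mathbb{N}[x,y]$ admits only finitely many divisors, so each $V(N)$ is itself finite; the intersection $\bigcap_\alpha V(N_\alpha)$ is therefore finite, and, listing its elements, Construction~\ref{construction1} furnishes a single Petri net $N^{*}$ whose polynomial is the product of the common decomposable factors arranged to have pairwise disjoint $\tau$-supports, giving $\bigcap_\alpha V(N_\alpha) = V(N^{*})$. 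Finally, $\varnothing = V(\mathscr{N}(1))$ since the constant polynomial $1$ admits no decomposable divisors, and the whole space is incorporated either as $V(N)$ for a formal ``universal'' net or by the standard closure convention.

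The principal obstacle is that $\mathbb{N}[x,y]$ is a semiring rather than a ring, so the usual UFD/GCD arguments underlying the Zariski topology on $\mathrm{Spec}(k[x,y])$ are not immediately at hand. I would circumvent this by embedding $\mathbb{N}[x,y] \hookrightarrow \mathbb{Z}[x,y]$, carrying out the factorization arguments in the latter UFD, and then tracking which divisors remain in $\mathbb{N}[x,y]$ and additionally satisfy the combinatorial $\tau$-disjointness hypothesis required by Theorem~\ref{theorem}. Verifying this compatibility -- together with the observation that the bijection $\mathscr{P}_\varphi \leftrightarrow \mathbb{N}[x,y]\setminus\{P:P(0,0)=0\}$ intertwines the Petri-net notion of factor with the polynomial notion of divisor -- is the delicate technical step; once it is in place, the collection $\{V(N)\}$ pulls back to the closed sets of the Zariski topology on the subset $\{P\in\mathbb{N}[x,y]:P(0,0)\ne 0\} \subset \mathrm{Spec}(\mathbb{N}[x,y])$, which yields the assertion.
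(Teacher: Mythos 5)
Your overall strategy --- transporting the Zariski topology on $\mathrm{Spec}\,\mathbb{N}[x,y]$ along the dictionary $(N,\varphi)\leftrightarrow P(N,\varphi)$ of Constructions \ref{construction1}, \ref{con2} and Theorem \ref{theorem} --- is exactly what the paper does; its entire proof is the one-sentence assertion that the topology ``arises from'' $\mathrm{Spec}\,\mathbb{N}[x,y]$ via those results. So at the level of approach you agree with the authors. However, your attempt to actually verify the closed-set axioms contains a step that fails as stated. You claim $V(N_1)\cup V(N_2)=V(N_1\times N_2)$ on the grounds that every decomposable factor of the product is a factor of one of the components. This is precisely where the standard Zariski argument uses \emph{primality}: a prime containing $IJ$ contains $I$ or $J$. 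But $V(N)$ is defined to consist of the \emph{decomposable} (i.e.\ non-prime) factors of $N$, so that argument is unavailable; indeed, if $N_1=M_1\times M_1'$ and $N_2=M_2\times M_2'$ with all four factors prime, then $M_1\times M_2$ is a decomposable factor of $N_1\times N_2$ lying in neither $V(N_1)$ nor $V(N_2)$, so the reverse inclusion is false and finite unions are not handled. (This difficulty is partly inherited from the paper: defining $V(N)$ via decomposable factors is also incompatible with the subsequent corollary that undecomposable nets are closed points, since no undecomposable net belongs to any $V(N)$.)

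Two further points you wave at but do not resolve. First, every $V(N)$ is finite while $\mathscr{P}_\varphi$ is infinite, so the whole space is never of the form $V(N)$; the appeal to a ``formal universal net'' or a ``standard closure convention'' is not something the paper supplies or that exists in this setting. Second, your claim that an arbitrary intersection of the $V(N_\alpha)$ is again of the form $V(N^{*})$ requires showing that the common decomposable factors are exactly the decomposable factors of a single net (a greatest common factor), which needs the unique-factorization argument via the embedding $\mathbb{N}[x,y]\hookrightarrow\mathbb{Z}[x,y]$ together with compatibility with the $\tau$-disjointness constraint of Theorem \ref{theorem}; this is exactly the ``delicate technical step'' you defer, and it is the whole content of the theorem. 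In short, you have correctly identified the intended mechanism, but the verification you sketch does not go through for the closed sets as the paper defines them.
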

\begin{proof}
  It is obviously that this topology arises from the Zariski topology on $\mathrm{Spec}\,\mathbb{N}[x,y]$. Using Constructions \ref{construction1}, \ref{con2}, Theorem \ref{theorem} and Corollary \ref{corol1} we complete the proof.
\end{proof}

\begin{corollary}
  Every undecomposable Petri net $N\in\mathsf{PN}$ is closed point in the topological space $\mathscr{P}_\varphi$.
\end{corollary}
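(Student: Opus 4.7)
The plan is to reduce the claim to a statement about polynomials in $\mathbb{N}[x,y]$ via the bijective correspondence $(N,\varphi)\leftrightarrow P(N,\varphi)$ supplied by Constructions \ref{construction1}--\ref{con2} and Proposition \ref{Prop=}. The crucial polynomial-level input is Corollary \ref{corol1}: $N$ is undecomposable iff $P(N,\varphi)$ admits no factorization $P(N,\varphi) = P_1 P_2$ with $\tau(P_1)\cap \tau(P_2) = \varnothing$ and $P_1(0,0), P_2(0,0)\ne 0$.

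Next I would unpack the definition of the closed sets in Theorem \ref{general}. Following the Zariski template, and using that the topology on $\mathscr{P}_\varphi$ is transported from the Zariski topology on $\mathrm{Spec}\,\mathbb{N}[x,y]$, the closed set $V(N)$ should correspond to $V\bigl((P(N,\varphi))\bigr)$, i.e.\ to the set of Petri nets $N'\in\mathscr{P}_\varphi$ whose polynomial $P(N',\varphi')$ factors as $P(N,\varphi)\cdot Q$ with $\tau(P(N,\varphi))\cap \tau(Q) = \varnothing$ and $Q(0,0)\ne 0$. By Theorem \ref{theorem} any such $N'$ is isomorphic in $\mathsf{PN}$ to the product $N \times \mathscr{N}(Q)$.

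To conclude, I would show that $V(N) = \{N\}$ when $N$ is undecomposable. If $N' \in V(N)$ then $N'\cong N \times \mathscr{N}(Q)$; writing $P(N',\varphi') = P(N,\varphi)\cdot Q$ and using the undecomposability of $N$ through Corollary \ref{corol1} forces $Q$ to be the constant polynomial $1$, so $\mathscr{N}(Q) = \{*\}$ and $N'\cong N$. Proposition \ref{Prop=} then identifies $N'$ with $N$ inside $\mathscr{P}_\varphi$, so the singleton $\{N\}$ is closed.

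The main obstacle is the clean formalization of the closed sets $V(N)$ promised by Theorem \ref{general}, whose statement leaves the precise relationship between $V(N)$ and $V\bigl((P(N,\varphi))\bigr)$ rather implicit. One must verify that the closed-set axioms are genuinely inherited from $\mathrm{Spec}\,\mathbb{N}[x,y]$ and that the $\tau$-disjointness constraint of Corollary \ref{corol1} matches the divisibility relation in that spectrum faithfully enough to exclude spurious $Q$; once this bookkeeping is in place, the reduction to $Q=1$ is immediate.
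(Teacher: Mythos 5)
There is a genuine gap: you have reversed the divisibility direction in the closed sets, and as a result your concluding step does not follow. In $\mathrm{Spec}\,\mathbb{N}[x,y]$ the closed set $V\bigl((P(N,\varphi))\bigr)$ consists of the prime ideals \emph{containing} $P(N,\varphi)$, i.e.\ (for principal ideals $(Q)$) of the \emph{divisors} $Q$ of $P(N,\varphi)$; accordingly Theorem \ref{general} defines $V(N)$ as a set of \emph{factors} of $N$, and the closure of the point $N$ is the set of its undecomposable factors. You instead take $V(N)$ to be the set of $N'$ with $P(N',\varphi')=P(N,\varphi)\cdot Q$, i.e.\ the \emph{multiples} of $N$; this is the set of generizations of the point, which is not closed in general. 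Worse, the step ``undecomposability of $N$ forces $Q=1$'' is a non sequitur: Corollary \ref{corol1} constrains factorizations \emph{of} $P(N,\varphi)$, not factorizations of polynomials that merely have $P(N,\varphi)$ as a factor. Indeed, for any nonconstant $Q$ with $Q(0,0)\ne 0$ and $\tau(Q)\cap\tau(P(N,\varphi))=\varnothing$, the net $N\times\mathscr{N}(Q)$ lies in your $V(N)$ but is not isomorphic to $N$, so your $V(N)$ is not the singleton $\{N\}$ even when $N$ is undecomposable.

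With the direction corrected, the argument you intended goes through almost verbatim and is arguably cleaner than the paper's: $\overline{\{N\}}=V(N)$ is the set of undecomposable factors of $N$, and by Theorem \ref{theorem} and Corollary \ref{corol1} an undecomposable $N$ has no such factor other than itself (and the trivial net), whence $V(N)=\{N\}$ and $N$ is a closed point. The paper's own proof takes a different and much shorter route: it invokes the correspondence between closed points and maximal ideals and asserts that $(P(N,\varphi))$ is maximal when $N$ is undecomposable (an assertion that itself deserves more justification, since irreducibility does not in general imply maximality of the generated ideal in two variables). Your factor-theoretic strategy, once the specialization/generization confusion is repaired, avoids that issue entirely.
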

\begin{proof}
  As well known, closed points in the Zariski topology of $\mathbb{N}[x,y]$ correspondence to maximal ideals of $\mathbb{N}[x,y]$. Because $N$ is undecomposable then it is clear that the ideal $(P(N,\varphi))$ is maximal. This completes the proof.
\end{proof}

\end{document}